\newtheorem{theorem}{Theorem}
\newtheorem{lemma}{Lemma}
\newtheorem{corollary}{Corollary}
\newtheorem{proposition}{Proposition}
\def\ScaleIfNeeded{%
\ifdim\Gin@nat@width>\linewidth \linewidth \else \Gin@nat@width
\fi } \makeatother
\begin{document}
\title{Channel Estimation for STAR-RIS-aided \\Wireless Communication}
%

\author{
	Chenyu~Wu, Changsheng~You, \IEEEmembership{Member, IEEE}, Yuanwei~Liu, \IEEEmembership{Senior Member,~IEEE}, Xuemai Gu, \IEEEmembership{Member,~IEEE}, and Yunlong~Cai, \IEEEmembership{Senior Member,~IEEE}	\vspace{-2em}
	

\thanks{C. Wu, and X. Gu are with the School of Electronic and Information Engineering, Harbin Institute of Technology (HIT), Harbin 150001, China (e-mail: \{wuchenyu, guxuemai\}@hit.edu.cn).}
\thanks{C. You is with the Department of Electrical and Electronic Engineering, Southern University
of Science and Technology (SUSTech), Shenzhen 518055, China (email: youcs@sustech.edu.cn). 
}
\thanks{Y. Liu is with the School of Electronic Engineering and Computer Science, Queen Mary University of London, London E1 4NS, UK (email: yuanwei.liu@qmul.ac.uk).}

\thanks{Y. Cai is with the College of Information Science and Electronic Engineering, Zhejiang University, Hangzhou 310027, China (email: ylcai@zju.edu.cn).}
}

\maketitle

\begin{abstract}
In this letter, we study efficient uplink channel estimation design for a simultaneously transmitting and reflecting reconfigurable intelligent surface (STAR-RIS) assisted two-user communication systems. We first consider the time switching (TS) protocol for STAR-RIS and propose an efficient scheme to separately estimate the channels of the two users with optimized training (transmission/reflection) pattern. Next, we consider the energy splitting (ES) protocol for STAR-RIS under the practical coupled phase-shift model and devise a customized scheme to simultaneously estimate the channels of both users. Although the problem of minimizing the resultant channel estimation error for the ES protocol is difficult to solve, we propose an efficient algorithm to obtain a high-quality solution by jointly designing the pilot sequences, power-splitting ratio, and training patterns. Numerical results show the effectiveness of the proposed channel estimation designs and reveal that the STAR-RIS under the TS protocol achieves a smaller channel estimation error than the ES case.


\end{abstract}

\begin{IEEEkeywords}
	Channel estimation,	reconfigurable intelligent surface, simultaneous transmission and reflection. 
\end{IEEEkeywords}

\section{Introduction}
Recently, reconfigurable intelligent surface (RIS) has emerged as a  promising technology to improve the spectrum and energy efficiency of future wireless systems\cite{survey_ris}. Specifically, RIS enables reconfigurable radio environment by smartly tuning the signal propagation via a large number of low-cost elements. However, in most existing works, RISs can only reflect signals within its front half-space\cite{zr_survey}, thus can only serve users on one side. To overcome this limitation, a novel concept of simultaneously transmitting and reflecting RISs (STAR-RISs) has been recently proposed\cite{STAR}, which can transmit and/or reflect the incident signals, thus enabling serving users in both sides of RIS to achieve \emph{full-space} smart radio environment. 
This has motivated growing research interests to investigate the benefits of deploying STAR-RISs in wireless networks, such as extending network coverage\cite{star2}, reducing power consumption\cite{star1}, and enhancing system throughput\cite{ios}.

To reap the passive beamforming gain of RISs/STAR-RISs, channel state information (CSI) is indispensable, which, however, is practically challenging to acquire due to the lack of signal transmission/processing capabilities for passive surface. To address this issue, various RIS channel estimation schemes have been proposed in the literature to attain the CSI by efficiently turning the reflections of RIS over time based on e.g., the ON/OFF-based\cite{on_off} and discrete Fourier transform (DFT)-based\cite{dft} training reflection pattern. 
However, these schemes \cite{on_off,dft} cannot be directly applied to STAR-RIS due to the following reasons. 
First, the channel estimation design for STAR-RIS hinges on its hardware design and operation protocol, e.g., time switching (TS) and energy splitting (ES) \cite{STAR,star1}; Though ES is shown to be preferable for downlink data transmission\cite{star1}, it is unknown whether ES maintains the superiority over TS in terms of channel estimation. Thus, new channel estimation schemes tailed to STAR-RIS need to be devised. Second, in contrast to the conventional RIS that requires designing a single reflection pattern only, both the transmission and reflection patterns of the STAR-RIS need to be jointly designed with the pilot sequences. In particular, for ES, the STAR-RIS phase-shifts for transmission and reflection are intricately coupled in practice, which makes the channel estimation design more complicated. 

To address the above issues, we study in this letter efficient channel estimation schemes for a STAR-RIS-assisted two-user communication system based on two practical protocols of TS and ES. For TS, we design the optimal training (transmission/reflection) pattern to separately estimate the concatenated channels of the two users. For ES, we consider a practical coupled phase-shift model for STAR-RIS and propose an efficient scheme to simultaneously estimate the channels of both users. Specifically, to minimize the mean square error (MSE) of channel estimation for the ES case, a joint optimization problem of the power-splitting ratio, pilot sequences, and training patterns is formulated. To solve this challenging problem, we first obtain the optimal solution under the ideal case where the transmission-and-reflection patterns are independently controlled, based on which a near-optimal solution under the practical phase-shift model is then developed. Interestingly, it is shown that the channel estimation error for ES is generally larger than the TS case as ES results in power leakage in the uplink channel estimation. Simulation results corroborate the effectiveness of our proposed designs and theoretical analysis.

\emph{Notations:} $\mathbb{C}^{M\times1}$ denotes the space of $M\times1$ complex-valued
vectors. $\mathbf{a}^T$ and $\mathbf{a}^H$ denote the transpose and the conjugate transpose of vector $\mathbf{a}$, respectively. $\text{diag}(\mathbf{a})$
denotes a diagonal matrix with the diagonal elements given by $\mathbf{a}^T$. For a square matrix $\mathbf{A}$, $\text{Tr}(\mathbf{A})$, and $\mathbf{A}^{-1}$ represent its trace and inverse, respectively. For any matrix $\mathbf{B}$, $\text{rank}(\mathbf{B})$ and $[\mathbf{B}]_{m,n}$ denote its rank and $(m,n)$th element. $\jmath$ denotes the imaginary unit, i.e., $\jmath^2=-1$. The distribution
of a circularly symmetric complex Gaussian variable with mean $\mu$ and variance $\sigma^2$ is denoted by $\mathcal{C}\mathcal{N}\sim{(\mu,\sigma^2)}$.

\section{System Model}

\begin{figure}[t]
	\centering
	\subfloat[A STAR-RIS aided two-user uplink communication systems]{		
		\includegraphics[width=0.23\textwidth]{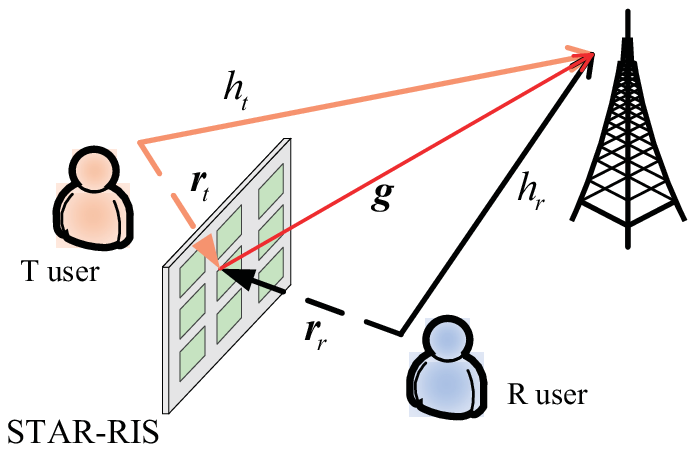}	
	}
	\subfloat[Operation protocols for STAR-RIS]{		
		\includegraphics[width=0.25\textwidth]{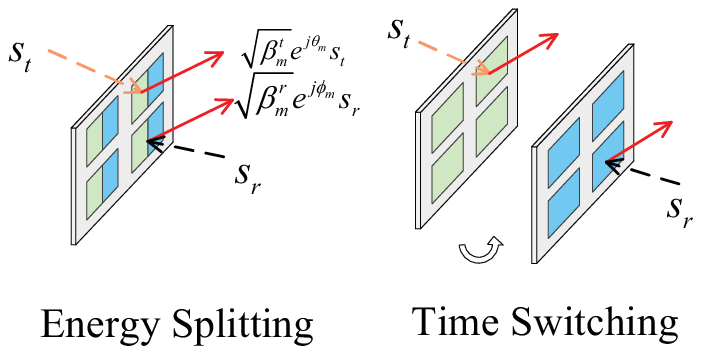}
	}	
	\caption{Illustration of system model.}
\end{figure}

As illustrated in Fig. 1(a), we consider a narrow-band wireless communication system, where a STAR-RIS with $M_0$ elements is deployed to assist the data transmission from two single-antenna users\footnote{Our design can be extended to the multi-user case by dividing users into multiple groups, each consisting of two users located on the two sides of RIS, and allocating orthogonal time/frequency resources to each group.} located on its both sides to a single-antenna base station\footnote{The design in our work is readily extendable to the BS with multiple antennas by estimating their channels in parallel.} (BS). 
We recall the elements-grouping strategy in \cite{group,group2} to reduce the channel estimation overhead, where the STAR-RIS elements are divided into $M$ sub-surfaces, each of which consists of $M_0/M$ adjacent elements that share a common transmission/reflection coefficient. The uplink signals of the two users can be either transmitted or reflected by the STAR-RIS to the BS, thus referred to as T user and R user, respectively. Let $h_k$ and $\mathbf{r}_k\in\mathbb{C}^{M\times1}$ denote the baseband equivalent channel from user $k$, $k\in\mathcal{K}\triangleq\{t,r\}$ to the BS and STAR-RIS, respectively. Further, we denote $\mathbf{g}^H\in\mathbb{C}^{1\times{M}}$ as the channel from the STAR-RIS to the BS; and $\mathbf{q}_k^H\triangleq\text{diag}(\mathbf{g})\mathbf{r}_k\in\mathbb{C}^{M\times{1}}$ as the cascaded channels from user $k$ to the BS. Moreover, we assume quasi-static block-fading channels and focus on the uplink communication in one typical fading block such that ${h}_k$,  $\mathbf{r}_k$, and $\mathbf{g}$ remain approximately constant. 

We consider two practical operation protocols for STAR-RIS, namely, time switching and energy splitting\cite{STAR,star1}, as shown in Fig. 1(b). In the following sections, we will elaborate the STAR-RIS channel estimation scheme for TS and ES, respectively.

\section{STAR-RIS Channel Estimation: Time Switching}
In this section, we propose an efficient scheme to separately estimate the concatenated channels of STAR-RIS under the TS protocol.  

\underline{Transmission and Reflection Model:} For TS, the STAR-RIS switches all elements between the transmission and reflection modes in two separate time intervals (referred to as T period and R period). 
We denote $\tau_t$ and $\tau_r$ as the number of time slots allocated to the T period and R period, respectively. Moreover, we define $\boldsymbol{\theta}_i^{\text{TS}}\triangleq[{e^{\jmath{\theta_{1,i}}}},..., \ldots ,{e^{\jmath{\theta _{M,i}}}}],\boldsymbol{\phi}_i^{\text{TS}}\triangleq[{e^{\jmath{\phi_{1,i}}}},..., \ldots ,{e^{\jmath{\phi _{M,i}}}}]$ as the transmission and reflection coefficient vectors, respectively, where
$\theta_{m,i},\phi_{m,i}\in[0,2\pi),m\in\mathcal{M}\triangleq\{1,...,M\}$ is the phase-shift of the $m$-th sub-surface in time slot $i$. 

\underline{Signal Model:} In the following, we elaborate the channel estimation design for the T user, while the scheme can be directly extended to the case of R user. Specifically, during the T period, the T user consecutively sends pilot symbols to the BS. Denote $p$ as the maximum transmit power.
Then, the baseband received signal at the BS in time slot $i$, $i=1,...,\tau_t$, is
\begin{equation}\label{ts_1}
\begin{aligned}
	y_i&=[h_t+\boldsymbol{g}^H\text{diag}(\boldsymbol{\theta}_i^{\text{TS}})\mathbf{r}_t]\sqrt{p}s_{t,i}+n_{i}\\&=(h_t+\boldsymbol{\theta}_i^{\text{TS}}\mathbf{q}_t^H)\sqrt{p}s_{t,i}+n_{i},
\end{aligned}
\end{equation}
where $n_i\sim {\mathcal{C}\mathcal{N}}\left(0,\sigma^2\right)$ is the additive white Gaussian noise at the BS; $|s_{t,i}|$ is the pilot symbol in time slot $i$, which can be set as $s_{t,i}=1$ for simplicity. As such, the overall received signal during the T period at the BS can be written as

\begin{equation}\label{signal_ts}
	\mathbf{y}^t=[y_1,\dots,y_{\tau_t}]^T=\sqrt{p}\boldsymbol{\Theta}\mathbf{x}^t+\mathbf{n},
\end{equation}
where $\boldsymbol{\Theta}$ denotes the transmission pattern matrix, the i-th row of which is given by $[1,\boldsymbol{\theta}_i^{\text{TS}}]$. $\mathbf{x}^t=[h_t,\mathbf{q}_t]^T$ denotes the composite channel vector associated with the T user, and $\mathbf{n}=[n_1,...,n_{\tau_t}]^T$. 
\vspace{-0.3cm}
\subsection{Problem Formulation}
According to (\ref{signal_ts}), if $\boldsymbol{\Theta}$ is of full column rank, the least-square (LS) estimate of $\mathbf{x}^t$ is given by $\hat{\mathbf{x}}=\frac{1}{\sqrt{p}}\boldsymbol{\Theta}^{\dagger}\mathbf{y}^t$,
where $\boldsymbol{\Theta}^{\dagger}=(\boldsymbol{\Theta}^H\boldsymbol{\Theta})^{-1}\boldsymbol{\Theta}^H$ is the pseudo-inverse of $\boldsymbol{\Theta}$. The MSE of the above LS estimation is\cite{dft}
\begin{equation}
	\begin{aligned}
		\text{MSE}^{\text{T}}=\mathbb{E}\left[\Vert\hat{\mathbf{x}}-\mathbf{x}\Vert_2^2\right]=
		\frac{\sigma^2}{p}\text{Tr}[(\boldsymbol{\Theta}^H\boldsymbol{\Theta})^{-1}].
	\end{aligned}
\end{equation}
The optimization problem for minimizing the MSE can be thus formulated as
\begin{subequations}\label{pro_ts}
	\begin{align}
		{\rm(P1)}:\mathop {\min }\limits_{\boldsymbol{\Theta}} \;\;&\frac{\sigma^2}{p}\text{Tr}[(\boldsymbol{\Theta}^H{\boldsymbol{\Theta}})^{-1}]\\
		\label{unit_mo1}{\rm{s.t.}}\;\;&\theta_{m,i},\phi_{m,i}\in[0,2\pi),m\in\mathcal{M},i=1,...,\tau_t,\\
		\label{rank1}&\text{rank}(\boldsymbol{\Theta})=M+1.
	\end{align}
\end{subequations}

\subsection{Proposed Solution}
According to \cite{dft}, an optimal solution to problem (P1) is an  ${(M+1)}\times{(M+1)}$ DFT matrix $\mathbf{D}_{M+1}$, whose entries are given by
\begin{equation}
	[\mathbf{D}_{M+1}]_{m,n}=e^{-\jmath\frac{2\pi{(m-1)(n-1)}}{M+1}},1\leq{m,n}\leq{M+1}.
\end{equation}
\newcounter{mytempeqncnt}
\setcounter{mytempeqncnt}{\value{equation}}
\begin{figure*}[b]
	\normalsize
	\setcounter{equation}{9}
	\hrulefill
	\begin{equation}	\label{H_2}
		\mathbf{V}=\begin{gathered}
			\begin{bmatrix}
				s_{t,1} &\sqrt{{\beta _1^t}}{e^{j{\theta _{1,1}}}}s_{t,1}&...&\sqrt{{\beta _M^t}}{e^{j{\theta _{M,1}}}}s_{t,1}& s_{r,1}&\sqrt{{\beta _1^r}}{e^{j{\phi _{1,1}}}}s_{r,1}&...&\sqrt{{\beta _M^r}}{e^{j{\phi _{M,1}}}}s_{r,1} \\
				\vdots&\vdots&\vdots&\ddots&\vdots&\vdots&\ddots&\vdots\\
				s_{t,\tau}&\sqrt{{\beta _1^t}}{e^{j{\theta _{1,\tau}}}}s_{t,\tau}&...&\sqrt{{\beta _M^t}}{e^{j{\theta _{M,\tau}}}}s_{t,\tau}&  s_{r,\tau}&\sqrt{{\beta _1^r}}{e^{j{\phi _{1,\tau}}}}s_{r,\tau}&...&\sqrt{{\beta _M^r}}{e^{j{\phi _{M,\tau}}}}s_{r,\tau}
			\end{bmatrix}
		\end{gathered}	
	\end{equation}
	\vspace*{4pt}
\end{figure*}
\setcounter{equation}{\value{mytempeqncnt}}

Following the same procedure in the R period, the CSI of the R user can be acquired. Then, it can be easily shown that under the minimum required overhead $\tau_t+\tau_r=2M+2$, the sum MSE of channel estimation by TS is  
\begin{equation}\label{mse_ts}	
	\text{MSE}^{\text{TS}}=\frac{2\sigma^2}{p}.
\end{equation}

\section{STAR-RIS Channel Estimation: Energy Splitting}
In this section, we introduce the channel estimation scheme for ES, where the channels of both users are estimated simultaneously.

\underline{Transmission and Reflection Model:} For the ES protocol, the signals impinged on each sub-surface is split into transmitted ones and reflected ones with an energy splitting ratio of $\beta_m^t,\beta_m^r$. Denote $\tau$ as the number of time slots for channel estimation. Accordingly, the transmission and reflection vectors at time slot $i$ is defined as 
$\boldsymbol{\theta}_i^{\text{ES}}\triangleq[\beta_{1}^t{e^{\jmath{\theta_{1,i}}}}, \ldots ,\beta_{M}^t{e^{\jmath{\theta _{M,i}}}}]$ and $\boldsymbol{\phi}_i^{\text{ES}}\triangleq[\beta_{1}^r{e^{\jmath{\phi_{1,i}}}}, \ldots ,\beta_{M}^r{e^{\jmath{\phi _{M,i}}}}]$, respectively. Note that according to the law of energy conservation, we have $\beta_m^t+\beta_m^r\leq1$\cite{STAR}.


We consider in this paper the \emph{practical coupled phase-shift model}, where the phase-shifts of each element for transmission and reflection, i.e., $\theta_{m,i},\phi_{m,i}$ are coupled with each other.\footnote{We assume continuous phase-shifts in this letter, while the results can be extended to a more general case with practical discrete phase-shift by proper quantization\cite{discrete_jsac}.} This model is practically accurate for a fully-passive STAR-RIS under the hardware constraint\cite{2014Dynamic}. Specifically, according to \cite{2014Dynamic,coupled_phase}, the coupled phase-shifts of each sub-surface $m$ should satisfy
\begin{equation}\label{con_couple}
		\cos({\theta_{m,i}}-{\phi_{m,i}})=0.
\end{equation}

\underline{Signal Model:} During the channel estimation stage, both users keep sending pilot symbols to the BS, while the training (transmission and reflection) patterns of the STAR-RIS are properly designed to assist the channel estimation. For a fair comparison, we set the same total transmit power of the users in each time slot as in the TS case; thus, the transmit power of the T user and R user is given by $p_t=p_r=\frac{p}{2}$.
As such, the baseband received signal at the BS in time slot $i$ is 
\begin{equation}
	y_i=(h_t+\boldsymbol{\theta}_i^{\text{ES}}\mathbf{q}_t^H)\sqrt{p/2}s_{t,i}+(h_r+\boldsymbol{\phi}_i^{\text{ES}}\mathbf{q}_r^H)\sqrt{p/2}s_{r,i}+n_{i},
\end{equation}
where $s_{k,i}$ denotes the pilot symbol of user $k$ in time instant $i$. 
Then, the overall received signal at the BS during the channel estimation stage is given by

\begin{equation}\label{signal_es}
	\mathbf{y}=[y_1,\dots,y_\tau]^T=\sqrt{p/2}\mathbf{V}\mathbf{x}+\mathbf{n},
\end{equation}
where $\mathbf{x}=[h_t,\mathbf{q}_t,h_r,\mathbf{q}_r]^T$ is the composite channel coefficient vector, $\mathbf{n}=[n_1,...,n_\tau]^T$, and $\mathbf{V}$ is given at the bottom of this page. If $\mathbf{V}$ is of full rank, the LS estimate of $\mathbf{x}$ is given by $\hat{\mathbf{x}}=\sqrt{\frac{2}{p}}\mathbf{V}^{\dagger}\mathbf{y}$, with the minimum overhead $\tau\geq{2M+2}$. Then, the MSE of channel estimation for the ES protocol can be expressed as\cite{dft}

\setcounter{equation}{\value{equation}+1}



\begin{equation}
	\text{MSE}^{\text{ES}}=\frac{2\sigma^2}{p}\text{Tr}[(\mathbf{V}^H{\mathbf{V}})^{-1}].
\end{equation}

\vspace{-0.5cm}
\subsection{Problem Formulation}

Define $\bar{\boldsymbol{\Theta}}$ and $\bar{\boldsymbol{\Phi}}$ as the training  pattern matrix that stacks the vector $\boldsymbol{\theta}_i^{\text{ES}}$ and $\boldsymbol{\phi}_i^{\text{ES}}$ of each time slot, respectively, i.e., $\bar{\boldsymbol{\Theta}}\triangleq[\boldsymbol{\theta}_1^{\text{ES}};...;\boldsymbol{\theta}_\tau^{\text{ES}}],\bar{\boldsymbol{\Phi}}\triangleq[\boldsymbol{\phi}_1^{\text{ES}};...;\boldsymbol{\phi}_\tau^{\text{ES}}]$.
For the ES protocol, we aim to minimize the MSE of channel estimation by jointly designing the pilot sequences of the users $\mathbf{s}_k{\triangleq\{s_{k,1},...,s_{k,\tau}\}^T},\forall{k\in\mathcal{K}}$, energy splitting ratio $\beta_m^k$, and training pattern matrices $\bar{\boldsymbol{\Theta}},\bar{\boldsymbol{\Phi}}$, which can be formulated as
\begin{subequations}\label{pro_es}
	\begin{align}
	{\rm{(P2)}}:\mathop {\min }\limits_{\{\mathbf{s}_k,\beta_m^k,\bar{\boldsymbol{\Theta}},\bar{\boldsymbol{\Phi}}\}} &\;\;\frac{2\sigma^2}{p}\text{Tr}[(\mathbf{V}^H{\mathbf{V}})^{-1}]\\
	\label{unit_mo_2}{\rm{s.t.}}&\;\;\theta_{m,i},\phi_{m,i}\in[0,2\pi),m\in\mathcal{M},i=1,...,\tau ,\\
	\label{rank2}& {\rm{rank}}(\mathbf{V})=2M+2,\\
	&\beta_m^t+\beta_m^r\leq1,\\
	&\beta^k_m>0, k\in\mathcal{K},\\
	\label{couple2}&\cos({\theta_{m,i}}-{\phi_{m,i}})=0.
	\end{align}
\end{subequations}
\subsection{Proposed Solutions}
To obtain useful insights, we first drop the constraints in (\ref{couple2}) and denote the relaxed problem as (P3). Note that (P3) corresponds to the ideal case, where the phase-shifts for transmission and reflection can be adjusted independently. 
\begin{proposition}
The optimal solution to problem (P3) should satisfy the following conditions: 
\begin{itemize}
	\item $\mathbf{V}$ is an orthogonal matrix.
	\item The pilot symbols of both users are always non-zero.
	\item The energy splitting ratio of each sub-surface is set identically as $\beta_m^t=\beta_m^r=0.5,\forall{m}\in\mathcal{M}$. 
\end{itemize}

\end{proposition}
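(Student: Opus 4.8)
The plan is to minimize $f(\mathbf{V})\triangleq\mathrm{Tr}[(\mathbf{V}^H\mathbf{V})^{-1}]$ by first deriving a lower bound that depends only on the column norms of $\mathbf{V}$, and then reading off the three bullets as the conditions under which that bound is met with equality. Throughout I write $\mathbf{G}\triangleq\mathbf{V}^H\mathbf{V}$, which is positive definite by the full-rank constraint (\ref{rank2}); its diagonal entries $G_{nn}$ are precisely the squared Euclidean norms of the columns of $\mathbf{V}$ in (\ref{H_2}).

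The first step establishes the orthogonality claim. Writing $\mathbf{e}_n^H\mathbf{e}_n=(\mathbf{G}^{1/2}\mathbf{e}_n)^H(\mathbf{G}^{-1/2}\mathbf{e}_n)$ for the $n$-th standard basis vector $\mathbf{e}_n$ and applying Cauchy--Schwarz gives $1\le(\mathbf{e}_n^H\mathbf{G}\mathbf{e}_n)(\mathbf{e}_n^H\mathbf{G}^{-1}\mathbf{e}_n)=G_{nn}[\mathbf{G}^{-1}]_{nn}$, so that $[\mathbf{G}^{-1}]_{nn}\ge 1/G_{nn}$. Summing over $n$ yields $f(\mathbf{V})\ge\sum_n 1/G_{nn}$, with equality iff $\mathbf{G}^{1/2}\mathbf{e}_n$ is parallel to $\mathbf{G}^{-1/2}\mathbf{e}_n$ for every $n$, i.e. iff $\mathbf{G}\mathbf{e}_n\propto\mathbf{e}_n$ for all $n$, which means $\mathbf{G}$ is diagonal and hence the columns of $\mathbf{V}$ are mutually orthogonal. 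This is the first bullet.

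The second step bounds $\sum_n 1/G_{nn}$ using the remaining constraints. From (\ref{H_2}) one reads $G_{11}=\|\mathbf{s}_t\|_2^2$, $G_{(m+1)(m+1)}=\beta_m^t\|\mathbf{s}_t\|_2^2$, $G_{(M+2)(M+2)}=\|\mathbf{s}_r\|_2^2$, and $G_{(M+2+m)(M+2+m)}=\beta_m^r\|\mathbf{s}_r\|_2^2$. Since $\sum_n 1/G_{nn}$ is strictly decreasing in each $G_{nn}$, every pilot slot must be used at full power; under the per-slot power budget $|s_{k,i}|\le 1$ (with $p/2$ factored out) this forces $|s_{k,i}|=1$ for all $i$, so no pilot symbol may vanish (the second bullet) and $\|\mathbf{s}_t\|_2^2=\|\mathbf{s}_r\|_2^2=\tau$. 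The splitting-ratio contribution then reduces to $\tfrac1\tau\sum_m(1/\beta_m^t+1/\beta_m^r)$; each summand is decreasing in its arguments, so the budget $\beta_m^t+\beta_m^r\le 1$ is active, and minimizing the convex function $1/\beta_m^t+1/(1-\beta_m^t)$ over $(0,1)$ (by setting the derivative to zero) gives $\beta_m^t=\beta_m^r=\tfrac12$, the third bullet.

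The step I expect to be the crux is showing that these three equality conditions are \emph{simultaneously} attainable, so that they genuinely characterize an optimizer rather than forming an infeasible wish-list; this is where the joint design of pilots and patterns matters, because orthogonality now couples the transmission and reflection blocks through their cross inner products, unlike the single-block TS case. I would discharge it by an explicit DFT construction at $\tau=2M+2$: let $\mathbf{d}_1,\dots,\mathbf{d}_{2M+2}$ be the mutually orthogonal, unit-modulus columns of $\mathbf{D}_{2M+2}$, set $s_{t,i}=[\mathbf{d}_1]_i$ and $s_{r,i}=[\mathbf{d}_{M+2}]_i$, and choose the phases $\theta_{m,i},\phi_{m,i}$ so that the $m$-th cascaded T- and R-columns equal $\tfrac{1}{\sqrt2}\mathbf{d}_{m+1}$ and $\tfrac{1}{\sqrt2}\mathbf{d}_{M+2+m}$, respectively. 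The resulting $\mathbf{V}$ has orthogonal columns, unit-modulus pilots, and $\beta_m^t=\beta_m^r=\tfrac12$, so it meets the derived lower bound. Because the bound is therefore tight, every optimal solution must attain equality in both inequalities and hence satisfy all three conditions, which completes the argument.
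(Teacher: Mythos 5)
Your proof is correct and follows essentially the same route as the paper's: lower-bound $\mathrm{Tr}[(\mathbf{V}^H\mathbf{V})^{-1}]$ by the sum of reciprocal squared column norms with equality iff $\mathbf{V}^H\mathbf{V}$ is diagonal (a step the paper simply cites from \cite{dft}, where you supply the Cauchy--Schwarz argument inline), then force unit-modulus pilots by monotonicity and minimize $1/\beta_m^t+1/\beta_m^r$ under $\beta_m^t+\beta_m^r\le1$ to obtain $\beta_m^t=\beta_m^r=0.5$. Your final attainability step via the DFT construction at $\tau=2M+2$ is precisely the example the paper presents right after the proposition (the design in (\ref{pattern1}) leading to (\ref{mse_es_ideal})), so you have merely folded that follow-up material into the proof, making it somewhat more self-contained than the paper's version without changing the approach.
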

\begin{proof}
The lower bound of MSE is attained when $\mathbf{V}^H\mathbf{V}$ is a diagonal matrix\cite{dft}, or in other words, the columns in $\mathbf{V}$ is orthogonal to each other. In this case, we have

\begin{equation}\label{proof1}
	\begin{aligned}	\text{Tr}[(\mathbf{V}^H{\mathbf{V}})^{-1}]&\overset{(a)}{=}{\sum_{q=1}^{2M+2}}\frac{1}{\sum_{p=1}^{\tau}|[\mathbf{V}]_{p,q}|^2}\\
	&\overset{(b)}{\geq}{\sum_{m=1}^{M}\frac{1}{\tau}(\frac{1}{\beta_m^t}+\frac{1}{\beta_m^r})}+\frac{2}{\tau}
	\end{aligned}
\end{equation}	
where (a) is due to that the $q$-th diagonal element in $\mathbf{V}^H\mathbf{V}$ is $\sum_{p=1}^{\tau}|[\mathbf{V}]_{p,q}|^2$. If $\beta_m^k\neq0$, the equality of (b) will hold when the pilot symbols are all non-zero, i.e., $|s_{k,i}|=1,\forall{k,i}$. Finally, the MSE minimization is equivalent to minimizing $(\frac{1}{\beta_m^t}+\frac{1}{\beta_m^r})$, whose lower bound is achieved when $\beta_m^t=\beta_m^r=0.5$. This completes the proof.
\end{proof}

Next, we give an example of the optimal solution for problem (P3), which satisfies all the conditions in $\textbf{Proposition 1}$.
First, we select any ${(2M+2)}\times{(2M+2)}$ orthogonal matrix $\mathbf{D}_{2M+2}$, e.g., the DFT/Hadamard matrix. 
The pilot sequences $\mathbf{s}_t$ and $\mathbf{s}_r$ are set as the first and $(M+2)$-th column of $\mathbf{D}_{2M+2}$, respectively. Then, the training patterns $\bar{\boldsymbol{\Theta}},\bar{\boldsymbol{\Phi}}$ can be set as

\begin{subequations}\label{pattern1}
	\begin{align}			
		&[\bar{\boldsymbol{\Theta}}]_{m,n}=\frac{[\mathbf{D}_{2M+2}]_{m,n+1}}{s_{t,m}},1\leq{m}\leq{2M+2},1\leq{n}\leq{M}\\
		&[\bar{\boldsymbol{\Phi}}]_{m,n}=\frac{[\mathbf{D}_{2M+2}]_{m,n+M+2}}{s_{r,m}},1\leq{m}\leq{2M+2},1\leq{n}\leq{M}.
	\end{align}
\end{subequations}


The optimality of the above design can be explained as follows: First, for ease of description, $\mathbf{V}$ in (\ref{H_2}) can be written as
\begin{equation}\label{V2}
\mathbf{V}=	[\mathbf{s}_t,\frac{1}{\sqrt{2}}{\text{diag}(\mathbf{s}_t)\bar{\boldsymbol{\Theta}}},\mathbf{s}_r,\frac{1}{\sqrt{2}}{\text{diag}(\mathbf{s}_r)\bar{\boldsymbol{\Phi}}}].
\end{equation} 
Since we divide the phase-shifts by the pilot symbols in (\ref{pattern1}), 
the $m$-th column ($\forall{n}\neq1,M+2$) in $\mathbf{V}$ equals the $m$-th column in $\mathbf{D}_{2M+2}$ times the energy splitting ratio $\frac{1}{\sqrt{2}}$. Meanwhile, the first and $(M+2)$-th column of $\mathbf{V}$ is the same as that in $\mathbf{D}_{2M+2}$ as introduced. Thus, $\mathbf{V}$ is an orthogonal matrix, which is an optimal solution to (P3). In this case, by substituting $\tau=2M+2$ and $\beta_m^k=0.5$ into (\ref{proof1}), the channel estimation MSE for ES is obtained as
\begin{equation}\label{mse_es_ideal}
	\text{MSE}^{\text{ES}}=(\frac{4M}{2M+2}+\frac{2}{2M+2})\frac{2\sigma^2}{p}=\frac{4M+2}{M+1}\frac{\sigma^2}{p},
\end{equation} 
which serves as a performance upper bound when evaluating the impact of practical phase-shifts.


\underline{Proposed Solution to Problem (P2):} Due to constraint (\ref{couple2}) introduced by the coupled phase-shifts, it is hard to find an optimal solution for problem (P2). Therefore, we aim to find a high-quality suboptimal solution by constructing a \emph{nearly-orthogonal} matrix $\mathbf{V}$ under the full-rank constraint. 

Inspired by the transmission/reflection training design under the ideal phase-shift case, we target to retaining the orthogonality of  $[\text{diag}(\mathbf{s}_t)\bar{\boldsymbol{\Theta}},\text{diag}(\mathbf{s}_r)\bar{\boldsymbol{\Phi}}]$ in $\mathbf{V}$. Interestingly, we find that constraint (\ref{couple2}) can be met if the pilot sequence of R user is changed to $\mathbf{s}_r=[\jmath,-\jmath,\jmath,-\jmath,...]$
and the training pattern is designed as in (\ref{pattern1}). The reason is as follows: From (\ref{pattern1}), we can find that for  $1\leq{\forall{i}}\leq{2M+2},1\leq{\forall{m}}\leq{M},$
\begin{equation}\label{phase}
	\frac{e^{\jmath{\theta_{m,i}}}}{e^{\jmath{\phi_{m,i}}}}=\frac{[\mathbf{D}_{2M+2}]_{i,m+1}}{[\mathbf{D}_{2M+2}]_{i,m+M+2}}\frac{s_{r,i}}{s_{t,i}}.
\end{equation}
If $\mathbf{D}_{2M+2}$ is a DFT matrix, mathematically we have
\begin{equation}
[\mathbf{D}_{2M+2}]_{i,m+M+2}=(-1)^{i+1}[\mathbf{D}_{2M+2}]_{i,m+1}.	
\end{equation} 
Therefore, $\frac{e^{\jmath{\theta_{m,i}}}}{e^{\jmath{\phi_{m,i}}}}=\jmath$, which satisfies (\ref{couple2}).
If $\mathbf{D}_{2M+2}$ is the Hadamard matrix, since its entries are either $+1$ or $-1$, $\frac{e^{\jmath{\theta_{m,i}}}}{e^{\jmath{\phi_{m,i}}}}$ is either $\jmath$ or $-\jmath$ according to (\ref{phase}),  which also satisfies (\ref{couple2}). 
Based on the above, the columns of $\mathbf{V}$ are all orthogonal with each other except for one column, which is $\mathbf{s}_r$. Therefore, it is expected that the proposed STAR-RIS channel estimation scheme under the practical phase-shift model approaches the MSE performance of that under the ideal model.


\subsection{Discussion}
As introduced, the minimum overhead for channel estimation for TS and ES is the same, which is $2M+2$. 
Nevertheless, from (\ref{mse_ts}) and (\ref{mse_es_ideal}), we can observe that, the minimum channel estimation error for the ES protocol is approximately twice as that for the TS protocol. This can be intuitively explained by the fact that after energy splitting, part of the uplink signals is transmitted or reflected towards the opposite side of the STAR-RIS from the BS, which leads to a reduction in the effective signal strength during channel estimation. 




\section{Numerical Results}
In this section, we provide numerical results to verify the effectiveness of our proposed channel estimation schemes for TS and ES. In the simulation, the STAR-RIS consists of $M_0=80$ elements and is divided into $M=20$ sub-surfaces. All involved channels are modeled as Rician fading with the Rician factor of 10 dB. The distance-dependent path losses are modeled as $l=\beta_0(d/d_0)^{-\alpha}$, where $\beta_0=-30$ dB denotes the path loss at the reference distance $d_0=1$ meter (m), $d$ represents the individual link distance, and $\alpha$ is the path-loss exponent. We consider a two-dimensional coordinate system, where the BS is located at the origin and the reference center of the STAR-RIS is at (50m, 0). The T user and the R user are located at (54m, 3m) and (46m, -3m), respectively. The path-loss exponents of the user-BS, user-STAR-RIS, BS-STAR-RIS channels are set as $3.5$, $2.8$, and $2.2$, respectively. We set the noise power as $\sigma^2=-110$ dBm and the maximum transmit power of the user as $p=30$ dBm, unless otherwise stated. 

We compare the performance of the proposed channel estimation schemes with the following benchmarks:
\begin{itemize}
	\item \textbf{ON/OFF scheme for TS:} Following the idea in \cite{on_off}, the direct links are estimated with all sub-surfaces turned off and the cascaded links are estimated with one sub-surface turned on at transmission/reflection mode sequentially.
	\item \textbf{Two-phase channel estimation for ES:} In this scheme, the direct links and cascaded links are estimated separately in two phases. Specifically, in the first phase, only the users send orthogonal pilot sequences with the sub-surfaces off to estimate the direct links. In the second phase, the transmission and reflection patterns are set as (\ref{pattern1}) to estimate the cascaded channels.
\end{itemize}
\begin{figure}[t]
	\centering
	\includegraphics[width=0.42\textwidth]{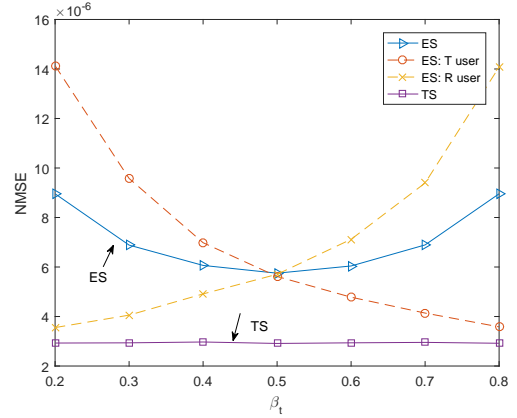}\\
	\caption{NMSE performance under different energy splitting ratios $\beta_t$.}
\end{figure}

\begin{figure}[t]
	\centering
	\includegraphics[width=0.42\textwidth]{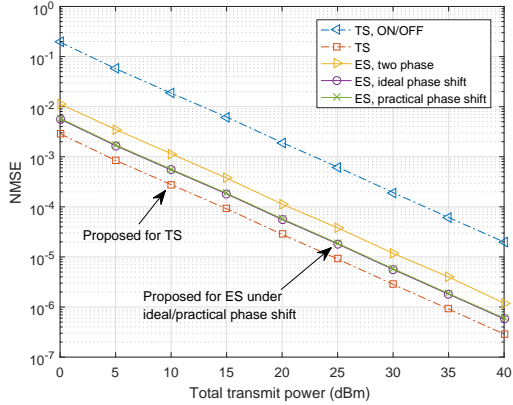}\\
	\caption{NMSE performance versus the total transmit power under different channel estimation schemes.}
\end{figure}
\begin{figure}[t]
	\centering
	\includegraphics[width=0.42\textwidth]{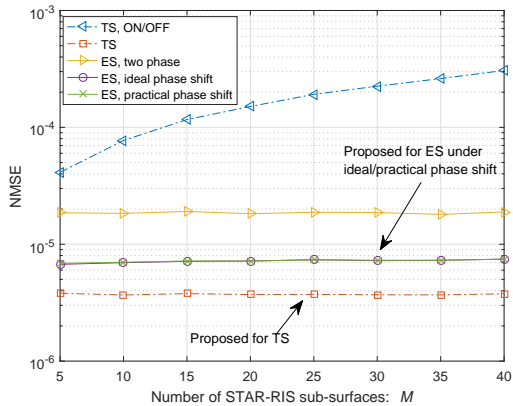}\\
	\caption{NMSE performance versus the number of sub-surfaces under different channel estimation schemes.}
\end{figure}

In Fig. 2, we plot the normalized MSE (NMSE) for STAR-RIS channel estimation under different energy splitting ratios. It is observed that by varying $\beta_t$, there exists a trade-off between the channel estimation accuracy of T and R users. Specifically, the minimum of the two users is obtained when $\beta_t=0.5$, which verifies our analysis in \textbf{Proposition 1}. Besides, with the optimal energy splitting ratio, the NMSE using ES is approximately twice as that using TS. This is because ES results in power leakage in the uplink channel estimation (see, Section IV. C).

In Fig. 3 and Fig. 4, we compare the performance of our proposed channel estimation schemes against the benchmarks. In Fig. 3, we plot the NMSE versus total transmit power. The key observations are made as follows: First, the NMSE decreases with the increasing of transmit power for all schemes, and the TS protocol yields the smallest NMSE. Second, the channel estimation error of the ON/OFF scheme is much larger than that of the other schemes since the large aperture of the surface is not fully utilized in the channel estimation stage. Third, the two phase-based scheme for ES behaves worse than our proposed scheme due to the error propagation issue. Specifically, the channel estimation error in the first phase for estimating direct links will deteriorate the performance in the second phase. Finally, the proposed scheme for ES with practical phase-shifts achieves close NMSE performance as that with ideal phase-shifts, which verifies the effectiveness of our proposed near-orthogonal training pattern design.

In Fig. 4, we examine the channel estimation performance of different schemes versus the number of sub-surfaces $M$. It is observed that the NMSE of the ON/OFF scheme increases with $M$ since the impact of noise accumulates with longer channel estimation time. For other schemes, the NMSE keeps constant, which is consistent with the analysis in (\ref{mse_ts}) and (\ref{mse_es_ideal}). Note that a larger $M$ requires longer channel estimation overhead, which leads to a shorter time of data transmission. Therefore, there exists a trade-off between the achievable rate and channel estimation overhead\cite{discrete_jsac}, which is an interesting topic in the future.



\section{Conclusion}
In this letter, we proposed efficient channel estimation schemes for a STAR-RIS assisted two-user communication system for the TS and ES protocols, respectively.
We first presented the optimal training patterns for TS to separately estimate the channels of two users. Then, a novel scheme for ES under the coupled phase-shift model was developed by jointly optimizing the pilot sequences, energy splitting ratio, and the training patterns, which achieves near-optimal MSE performance. Numerical results demonstrated that TS is more cost-effective than ES in terms of uplink channel estimation.

\bibliographystyle{IEEEtran}
\bibliography{ce_bib}

\end{document}